\tikzset{>=latex}
\tikzstyle{block} = [draw, rectangle, minimum size=3em]
\newtheorem{theorem}{Theorem}
\newtheorem{proposition}{Proposition}
\newtheorem{definition}{Definition}
\newtheorem{remark}{Remark}
\newtheorem{example}{Example}
\title{\bf Synthesis of Insertion Functions to Enforce Decentralized and Joint Opacity Properties of Discrete-event Systems}
\author{Bo~Wu, Jin~Dai and Hai~Lin
%	% <-this % stops a space
	\thanks{The partial support of the National Science Foundation (Grant No. CNS-1446288, ECCS-1253488, IIS-1724070) and of the Army Research Laboratory (Grant No. W911NF- 17-1-0072) is gratefully acknowledged.}
	\thanks{The authors are with the Department of Electrical Engineering, University of Notre Dame, Notre Dame,
		IN, 46556 USA. {\tt\small bwu3@nd.edu, jdai1@nd.edu, hlin1@nd.edu}}}% <-this % stops a space
\begin{document}
\maketitle
\thispagestyle{empty}
\pagestyle{empty}

\begin{abstract}
Opacity is a confidentiality property that characterizes the non-disclosure of specified secret information of a system to an outside observer. In this paper, we consider the enforcement of opacity within the discrete-event system formalism in the presence of multiple intruders. We study two cases, one without coordination among the intruders and the other with coordination. We propose appropriate notions of opacity corresponding to the two cases, respectively, and propose enforcement mechanisms for these opacity properties based on the implementation of insertion functions, which manipulates the output of the system by inserting fictitious observable events whenever necessary. The insertion mechanism is adapted to the decentralized framework to enforce opacity when no coordination exists. Furthermore, we present a coordination and refinement procedure to synthesize appropriate insertion functions to enforce opacity when intruders may coordinate with each other by following an intersection-based coordination protocol. The effectiveness of the proposed opacity-enforcement approaches is validated through illustrative examples.
\end{abstract}

%\begin{IEEEkeywords}
%Discrete event systems, opacity, insertion functions, synthesis.
%\end{IEEEkeywords}

\section{Introduction}\label{sect:introduction}
Security and privacy have become important issues in the design of cyber and cyber-physical systems \cite{bishop2003computer}. In this paper, we focus our study on {\it opacity} \cite{mazare2004using}, which is a confidentiality property  that justifies whether a given system's confidential information (denoted as {\it ``secret"}) is kept uncertain from an external observer (termed as an {\it intruder}). Since many security and privacy properties, such as anonymity \cite{kumari2014more}, trace-based non-interference \cite{hadj2005verification} and secrecy \cite{alur2006preserving,rabbachin2015wireless}, can be expressed in terms of opacity \cite{lin2011opacity}, it has emerged as an active research topic in the computer science and control literature, see, e.g., \cite{jacob2016overview} and the references therein.

Motivated by the fact that many engineering systems are inherently event-driven, we consider opacity issues in the framework of discrete-event systems (DES) \cite{cassandras2008introduction}. An opacity problem is generally formulated as follows in the context of DES: (i) the system is modeled as a Petri net \cite{bryans2005modelling,tong2017verification} or a finite automaton \cite{saboori2007notions}; (ii) the system possesses a secret that is expected to be hidden from an intruder; (iii) the intruder is an observer with full knowledge of the system's structure but can only observe part of the system's behavior. The system is said to be opaque with respect to the given secret if the intruder can never determine unambiguously that the secret has occurred based on its observation of the system's behaviors. More specifically, if opacity of the system holds, then for any behavior that may reveal the secret (termed {\it secret behavior}), there exists at least one behavior that does not reveal the secret (termed {\it non-secret behavior}) which shares the same observation of the secret behavior to the intruder; thus, the intruder can never be sure if the secret or the non-secret has occurred. Depending on how the secret is represented, various notions of opacity have been introduced in the literature, and considerable amount of research efforts has been devoted to the formal verification of language-based opacity \cite{lin2011opacity}, current-state opacity \cite{saboori2014current}, initial-state opacity \cite{saboori2007notions}, $K$-step opacity \cite{saboori2011verification} and infinite-step opacity \cite{saboori2012verification}. 

In case the system fails to be opaque, formal methods have also been proposed to enforce opacity. Design of opacity-enforcing supervisory controllers for restricting the system's behavior to ensure opacity by disabling any behavior that will reveal the secret has been studied extensively in literature \cite{saboori2012opacity,dubreil2010supervisory,yin2016uniform,ben2016opaque}. Nevertheless, the supervisory control approach is not suitable for situations where the system must execute its full behavior. A runtime mechanism was developed in \cite{falcone2015enforcement} to enforce $K$-step opacity based on delaying the output; however, this method only ensured opacity of secrets whose time duration was of concern. Rather than supervisory control approach, we consider enforcement strategies that do not alter the behavior of the system and instead ensure opacity by appropriately manipulating the system's output information whenever necessary. One of the enforcement techniques was implemented by a dynamic observer in \cite{cassez2012synthesis}. However, the intermittent loss of observability of certain events may render the observation of the intruder inconsistent with its knowledge of the system and its original observation capabilities, which may remind the intruder of the existence of the opacity-enforcement mechanism. Wu and Lafortune \cite{wu2014synthesis} proposed an enforcement mechanism based on insertion of fictitious observable events at the system's output; the inserted events were observationally equivalent to the system's genuine observable events from the intruder's perspective, therefore making the intruder confused.

%\begin{figure}[H]
%\centering
%\includegraphics[scale=0.45]{images/general}\\
%\caption{Opacity-enforcement of discrete-event systems.}\label{fig:figGS}
%\vspace{-2.5mm}
%\end{figure}
%The typical architecture of such enforcement strategies is illustrated in Fig.~\ref{fig:figGS}.

Recent advances in communication and network technologies have made large-scale systems with spatially-decentralized and/or distributed architectures more widely used in the application; therefore, opacity problems for DES with decentralized structure are of both academic and practical importance. For instance, for a cryptosystem that can be observed by users of multiple security levels, opacity should be guaranteed in such a way that: (i) users with lower security level can never infer any information which can only be accessed by users of high security level \cite{hadj2005verification}; (ii) even if a user has a high security level, it is still not able to infer any private information that is possessed by a user with low security level \cite{badouel2007concurrent}. Compared to the fruitful contributions that have been made to opacity problems in the presence of a single intruder, limited studies have been made to the cases where the system can be observed by multiple intruders. Badouel et al. \cite{badouel2007concurrent} considered multiple intruders, each of them having its own observation mapping and the secret of interest. The system therein was said to be concurrently opaque if all secrets can be kept safe. A different notion termed as ``joint opacity" was proposed in \cite{wu2013comparative}, in which a team of intruders collaborated through a coordinator to infer the secret of common interest. Paoli and Lin \cite{paoli2012decentralized} studied decentralized opacity issues with and without coordination among the intruders. Nevertheless, to the best of the authors' knowledge, most of the existing results are established on opacity verification problems while no prior work has been proposed to investigate opacity-enforcement problems in the presence of multiple intruders.

We are therefore motivated to study opacity-enforcement problems of DES that can be observed by multiple intruders. By modeling the system as a finite automaton, we assume that each intruder has full prior knowledge of the system model but can only partially observe the behavior of the system. We investigate opacity problems in two cases, one assuming no coordination among the intruders and the other assuming that the intruders may coordinate with each other. We adopt the enforcement mechanism based on insertion functions to assure decentralized opacity when no coordination exists among the intruders. Furthermore, we study the enforcement of joint opacity when the intruders may coordinate via an intersection-based protocol. Facing the coordinated intruders, we propose a centralized coordination and refinement procedure to construct local insertion functions associated with each intruder's observation capabilities such that joint opacity can be guaranteed.

The remainder of this paper is organized as follows. We present the system model and relevant concepts of opacity problems in DES and the insertion-based opacity-enforcement mechanism of DES in Section II. We study the opacity-enforcement problem of DES in the presence of multiple non-coordinating intruders and compute appropriate insertion functions for each intruder in Section III. Under the assumption that the intruders may coordinate via an intersection-based protocol, we introduce the notion of joint opacity in Section IV and develop enforcement schemes for joint opacity by incorporating the synthesis of local insertion functions with centralized coordination. Finally, we end this paper with concluding remarks and discussion of future research directions in Section V.

\section{Opacity of Discrete-event Systems}\label{sect:Preliminaries}

\subsection{Preliminaries of Discrete-event Systems}\label{subsect:Automata models}
The following notation and concepts are standard in the DES literature \cite{cassandras2008introduction}. For a finite alphabet $E$ of event symbols, $|E|$ and $2^E$ denote the cardinality and power set of $E$, respectively. $E^*$ stands for the set of all finite strings over $E$ plus the empty string $\epsilon$. A subset of $L\subseteq E^*$ is called a language over $E$. The prefix closure of $L$ is defined by $\overline{L}=\{s\in E^*|(\exists t\in E^*)[st\in L]\}$. $L$ is said to be {\it prefix-closed} if $L=\overline{L}$.

We consider the DES modeled as a {\it non-deterministic finite automaton} (NFA) $G=(X,E,f,X_0)$, where $X$ is the finite set of states, $E$ is the finite set of events, $f: X\times E\to 2^X$ is the (partial) transition function, $X_0\subseteq X$ is the set of initial states. The transition function $f$ can be extended to $X\times E^*$ in the natural way \cite{cassandras2008introduction}.Given a set $X'\subseteq X$ of states, the language generated by $G$ from $X'$ is defined by $L(G, X')=\{s\in E^*\vert (\exists x'\in X')[f(x',s)!]\}$, where $f(x',s)!$ means that the transition $f(x',s)$ is defined. The generated behavior of $G$ is then given by $L(G,X_0)$. We write $L(G)$ for simplicity if $X_0$ is clear from the context.

In general, the system $G$ can only be partially observed. Towards this end, $E$ is partitioned into two disjoint subsets, i.e., $E=E_o\dot\cup E_{uo}$, where $E_o$ is the set of observable events and $E_{uo}$ is the set of unobservable events. The presence of partial observation is captured by the natural projection $P:E^* \to E_o^*$, which is defined as:
\begin{equation}
P(\epsilon)=\epsilon, \mbox{ and }
P(se)=
\begin{cases}
P(s)e,  & \mbox{if } e\in E_o \\
P(s), &\mbox{if } e\in E_{uo}
\end{cases}    
\end{equation}
for all $s\in E^*$ and $e\in E$. The inverse projection of $P$ is defined as $P^{-1}(t)=\left\{s\in E^*\vert P(s)=P(t)\right\}$ for $t\in E_o^*$.

\subsection{Current-state Opacity of Discrete-event Systems}\label{sect:pro-definition}
The ingredients of an opacity-enforcement problem in DES include: (i) $G$ has a secret; (ii) the intruder is an observer with full knowledge of the structure of $G$; (iii) the intruder can only observe the behavior of $G$ partially due to its limited observation capabilities $E_o$. With the prior knowledge of $G$, the intruder can infer the system's evolution by constructing estimates on the basis of online observations. Depending on how the secret is defined, various notions of opacity have been extensively studied in the literature. In this paper, we define the secret to be a set of states of $G$ and consider the notion of {\it current-state opacity}. Intuitively, the system $G$ is current-state opaque if for any secret behavior that visits a secret state, there always exists a non-secret behavior of $G$ that visits a non-secret state while the intruder cannot distinguish
between these two behaviors. Formally, current-state opacity is defined as follows.

\begin{definition}[Current-state Opacity (CSO)]
Given the set of observable events $E_o\subseteq E$, the set of {\it secret states} $X_S\subseteq X$ and the set of {\it non-secret states $X_{NS}\subseteq X$}, the system $G=(X,E,f,X_0)$ is said to be current-state opaque  with respect to $E_o$, $X_S$ and $X_{NS}$ if 
\begin{equation}
\begin{split}
    & (\forall x_0\in X_0)(\forall t\in L(G,x_0): f(x_0,t)\in X_S) \Rightarrow (\exists x_0'\in X_0)\\
    & (\exists t'\in L(G,x_0'))[(f(x_0',t')\in X_{NS})\land(P(t')=P(t))].
\end{split}
\end{equation}
\end{definition}

\begin{remark}
We assume without loss of generality in the rest of this paper that the set of non-secret states is the complement of the secret state set, i.e., $X_{NS}=X\setminus X_S$.
\end{remark}

\begin{remark}
According to \cite{wu2013comparative}, other notions of opacity, specifically language-based opacity, initial-state opacity and initial-and-final-state opacity, can all be transformed to CSO in polynomial time. Thus, our proposed enforcement approach for CSO of DES applies to the enforcement of other opacity notions as well.
\end{remark}

\subsection{Event Insertion Mechanism}\label{sect:problem formulation}

In \cite{wu2014synthesis}, the authors proposed an opacity-enforcement mechanism based on the implementation of insertion functions when the system $G$ fails to be CSO.  As shown in Fig. \ref{fig:fig9}, an insertion function serves as a special monitoring interface between the system and the intruder. The insertion function receives an output behavior in $s\in P[L(G)]$ and inserts fictitious observable events before $s$ is observed whenever the intruder may infer the occurrence of the secret from $s$. It is worth pointing out that the intruder cannot distinguish inserted observable events from the system's genuine observable events.

\begin{figure}[H]
\centering
\includegraphics[scale=0.4]{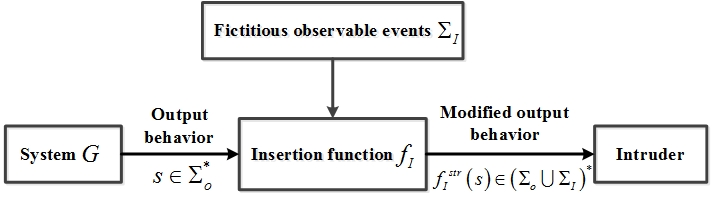}\\
\caption{Opacity-enforcement based on event insertion.}\label{fig:fig9}
\vspace{-2.5mm}
\end{figure}

For the purpose of clear presentation, we associate each inserted event with an ``insertion label" $I$, and the set of inserted events is denoted by ${E_I} = \{ {e_I}:e \in {E_o}\} $.
\begin{comment}
In practice, it is desirable that both the system and the insertion function can clearly distinguish inserted events and observable events. Toward this end, we define ${P_{und}}: (E_I\cup E_o)\to E_o$ to be a projection that ``filters" inserted events from the output of the insertion function: $P_{und}(e_I)=\epsilon$, $\forall e_I\in E_I$; $P_{und}(e)=e$, $\forall e\in E_o$. Clearly, $P_{und}$ can be extended to a mapping from $(E_I\cup E_o)^*$ to $E_o^*$ in a recursive manner. In addition, another projection ${P_{oI}}: (E\cup E_o)\to (E_I\cup E_o)$ is used to exclude events which are neither observable nor inserted, i.e., 
\begin{equation}\label{eqn:steady1}
\forall e\in E \cup {E_I}: P_{oI}(e)=
\begin{cases}
e,  & \mbox{if } e\in E_o\cup E_I, \\
\epsilon, &\mbox{if } e\in E_{uo}\cup \{\epsilon\}.
\end{cases}    
\end{equation}
$P_{oI}$ can be extended to a mapping from $(E\cup E_o)^*$ to $(E_I\cup E_o)^*$ in the usual manner as well.

Finally, the intruder's observation of the output behavior of the insertion function  is captured by the observation mask ${M_I}: (E_o\cup E_I)\to E_o$ rather than $P$: ${M_I}({e_I}) = e$, $\forall {e_I} \in {E_I}$ and ${M_I}(e) = e$, $\forall e \in {E_o}$. Note that $M_I$ can also be extended to the domain $(E_o\cup E_i)^*$ in the usual manner.
\end{comment}
%The insertion function shown in Fig.~\ref{fig:fig9} manipulates the output strings of the system such that the intruder, observing through the mask $M_I$, cannot tell whether the observed string includes inserted events or not. 
Formally, the basic structure of an insertion function  is defined as a (possibly partial) mapping ${f_I}$: $E_o^ *\times E_o \to E_I^ * {E_o}$ that outputs a string with necessarily inserted events based on the system's historical and current output behavior. Given a string $se_o\in P[L(G)]$ that has been observed by the insertion function, the output behavior of the insertion function before the occurrence of $e_o$
is defined as ${f_I}(s,{e_o}) = {s_I}{e_o}$ where ${s_I} \in E_I^* $ is the inserted string. In the sequel, we assume additionally that length of $s_I$ is bounded from above. To determine the complete modified output from the insertion function, we define recursively an induced insertion function $f_I^{str}$ from $f_I$: $f_I^{str}(\epsilon ) = \epsilon $ and $f_I^{str}(s_n) = {f_I}(\epsilon ,e_1){f_I}(e_1,e_2) \cdots {f_I}(e_1e_2e_{n - 1},e_n)$ where $s_n = e_1e_2 \cdots e_n \in E_o^ * $. 

The modified output $L_{out}$ of the system $G$ under the impact of the insertion function $f_I$ is then given by 
\begin{equation}
\begin{split}
L_{out}:&=f_I^{str}(P[L(G)]) \\
&=\{ \tilde s \in {(E_i^ * {E_o})^ * }\vert\exists s \in P[L(G)]: \tilde s = f_I^{str}(s)\}.
\end{split}
\end{equation}
To pursue succinct notations, we use $f_I$ and $f_I^{str}$ interchangeably in the sequel. Specifically, in this paper we are looking for the insertion functions that satisfy the \empty{private enforceability} \cite{wu2016enhancing}.
\begin{definition}[Private Enforceability]\label{def:private enforceability}
    Given a DES $G$ and the observation mask $P$, an insertion function $f_I$ is privately enforcing if (i) admissibility: $\forall se_o\in P[L(G)], s\in E_o^*, e_o\in E_o$, $\exists s_I\in E_I^*$ such that $f_I(s,e_o)=s_Ie_o$; (ii) private safety: $L_{out}\subseteq L_{safe}=P[L(G)]\backslash (P[L(G)]\backslash P(L_{NS}))E^*_o$, where $L_{NS}=\{t\in L(G,X_0)|\exists x_i\in X_0, f(x_i,t)\cap X_{NS}\neq\emptyset\}$.
\end{definition}

Intuitively, the admissibility requires that the insertion function $f_I$ is well defined on all the strings from $P[L(G)]$. The private safety requirement restricts the modified output $L_{out}$ from $f_I$ to the non-secret behavior of the system $L_{safe}$, which is the set of projected strings that never reveal the secret. Therefore, by the definition, a private enforcing insertion function guarantees CSO.

\section{Enforcement of Decentralized Opacity via Insertion Functions}

In this section, we focus our study on opacity problems of DES with a decentralized architecture. Specifically, we extend the investigation of opacity-enforcement strategies to the case in which the system $G$ can be observed by multiple intruders as shown in Fig. \ref{fig:decentralized}.

\begin{figure}[H]
\centering
\includegraphics[scale=0.4]{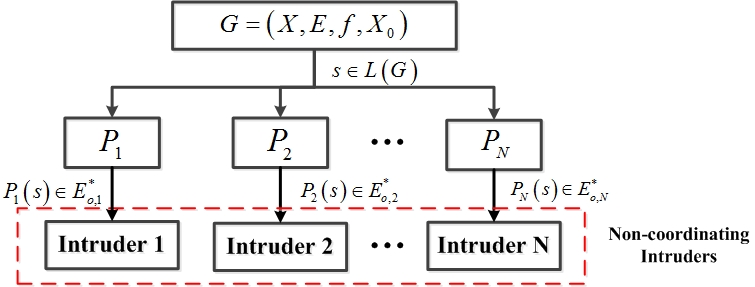}\\
\caption{The DES $G$ observed by non-coordinating intruders.}\label{fig:decentralized}
\vspace{-3.5mm}
\end{figure}

We first consider the case where no coordination exists among intruders. Let $I_i$, $i\in\mathcal{N}=\{1,2,\ldots,N\}$ denote a team of $N$ intruders. Similar to the centralized scenario, each intruder has a complete prior knowledge of the system $G$. Intruder $I_i$ is associated with the locally observable events $E_{o,i}\subseteq E$, $i\in\mathcal{N}$. The partial observation for $I_i$ is characterized by the projection $P_i: E^*\to E_{o,i}^*$ when no insertion function exists. The property of {\it decentralized current-state opacity} is formally defined as follows.

\begin{definition}[Decentralized CSO (D-CSO)]\label{def:DCSO}
Given the set of observable events $E_{o,i}\subseteq E$ for intruder $I_i$, $i\in \mathcal{N}$, the secret state set $X_S$, and the non-secret state set $X_{NS}$, the system $G=(X,E,f,X_0)$ is said to be decentralized current-state opaque with respect to $E_{o,i}$ $i\in \mathcal{N}$, $X_S$ and $X_{NS}$ if 
\begin{equation}
\begin{split}
    & (\forall i\in\mathcal{N})(\forall x_{0,i}\in X_0)(\forall t_i\in L(G,x_{0,i}): f(x_{0,i},t_i)\in X_S)  \\
    &\Rightarrow (\exists x_{0,i}'\in X_0)(\exists t_i'\in L(G,x_{0,i}'))[(f(x_{0,i}',t_i')\in X_{NS})\land \\
    & (P_i(t_i')=P_i(t_i))].
\end{split}
\end{equation}
\end{definition}

The D-CSO of $G$ suggests that any secret state in $X_S$ be not inferred by any one of the intruders. It follows from Definition \ref{def:DCSO} that D-CSO can be viewed as a decentralized counterpart of CSO, which implies that enforcing D-CSO for $G$ with multiple intruders is equivalent to enforcing CSO with respect to each individual intruder $I_i$, $i\in\mathcal{N}$. Motivated by this fact, we can synthesize local opacity-enforcing insertion function $f^i$ for $I_i$, $i\in\mathcal{N}$ independently.

We illustrate the idea of synthesizing appropriate insertion functions for each intruder by the following example.

	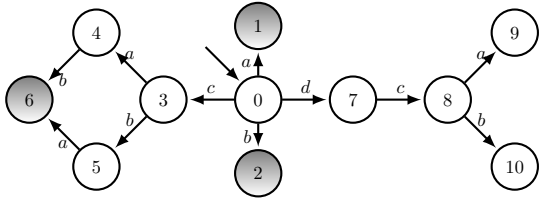
\begin{figure}[ht]
		\centering	
		\begin{tikzpicture}[shorten >=1pt,node distance=1.8 cm,on grid,auto, bend angle=20, thick,scale=.7, every node/.style={transform shape}]
		\node[state] (s_0)   {$0$};
		\node[state,shade] (s_1) [above= 1.4 cm of s_0] {$1$};
		\node[state,shade] (s_2) [below= 1.4 cm of s_0] {$2$};
		\node[state] (s_3) [left=of s_0] {$3$};
		\node[state] (s_4) [above left=of s_3] {$4$};
		\node[state] (s_5) [below left=of s_3] {$5$};
		\node[state,shade] (s_6) [below left=of s_4] {$6$};
		\node[state] (s_7) [right=of s_0] {$7$};
		\node[state] (s_8) [right=of s_7] {$8$};
		\node[state] (s_9) [above right=of s_8] {$9$};
		\node[state] (s_10) [below right=of s_8] {$10$};
		\draw [->] (-1,1) -- (s_0);
		\path[->]
		(s_0) edge node [pos=0.5, left]{$a$} (s_1)
		(s_0) edge node [pos=0.5,  left]{$b$} (s_2)
		(s_0) edge node [pos=0.5,  above]{$c$} (s_3)
		(s_0) edge node [pos=0.5, above]{$d$} (s_7)
		(s_3) edge node [pos=0.5,  above]{$a$} (s_4)
		(s_3) edge node [pos=0.5,  above]{$b$} (s_5)
		(s_4) edge node [pos=0.5,  below]{$b$} (s_6)
		(s_5) edge node [pos=0.5,  below]{$a$} (s_6)
		(s_7) edge node [pos=0.5, above]{$c$} (s_8)
		(s_8) edge node [pos=0.5, above]{$a$} (s_9)
		(s_8) edge node [pos=0.5,above]{$b$} (s_10);
		\end{tikzpicture}
		\caption{An illustrative example, the secret states are shaded}
		\label{fig:jo}

	\end{figure}
	
		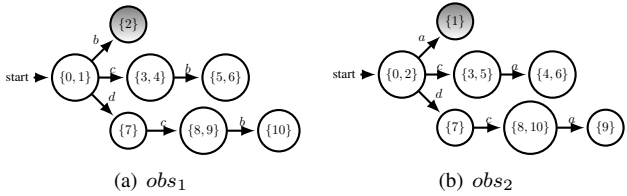
\begin{figure}
			\centering
			\subfigure[$obs_1$]{		
				\begin{tikzpicture}[shorten >=1pt,node distance=2cm,on grid,auto, bend angle=20, thick,scale=0.5, every node/.style={transform shape}] 
				\node[state,initial] (s0)   {$\{0,1\}$}; 
				\node[state,shade] (s1) [above right= of s0] {$\{2\}$}; 	
				\node[state] (s2) [right= of s0]  {$\{3,4\}$}; 
				\node[state] (s3) [below right= of s0] {$\{7\}$}; 
				\node[state] (s4)  [right= of s2] {$\{5,6\}$}; 
				\node[state] (s5) [right= of s3] {$\{8,9\}$}; 
				\node[state] (s6) [right= of s5]  {$\{10\}$}; 
				
				\path[->]

				(s0) edge node {$b$} (s1)  
				(s0) edge node {$c$} (s2)  
				(s0) edge node {$d$} (s3)  
				(s2) edge node {$b$} (s4)  
				(s3) edge node {$c$} (s5)  
				(s5) edge node {$b$} (s6) 
				
				; %end path 	
				\end{tikzpicture} 
			}
			\subfigure[$obs_2$]{		
				\begin{tikzpicture}[shorten >=1pt,node distance=2cm,on grid,auto, bend angle=20, thick,scale=0.5, every node/.style={transform shape}] 
				\node[state,initial] (s0)   {$\{0,2\}$}; 
				\node[state,shade] (s1) [above right= of s0] {$\{1\}$}; 	
				\node[state] (s2) [right= of s0]  {$\{3,5\}$}; 
				\node[state] (s3) [below right= of s0] {$\{7\}$}; 
				\node[state] (s4)  [right= of s2] {$\{4,6\}$}; 
				\node[state] (s5) [right= of s3] {$\{8,10\}$}; 
				\node[state] (s6) [right= of s5]  {$\{9\}$}; 
 	
				\path[->]

				(s0) edge node {$a$} (s1)  
				(s0) edge node {$c$} (s2)  
				(s0) edge node {$d$} (s3)  
				(s2) edge node {$a$} (s4)  
				(s3) edge node {$c$} (s5)  
				(s5) edge node {$a$} (s6)

				; %end path 	
				\end{tikzpicture} 
			}
			\caption{Observers for the system $G$.}
			\vspace{-3.5mm}
			\label{fig:obs}
		\end{figure}

\begin{example}		
Consider $G=(X,E,f,X_0)$ shown in Fig.~\ref{fig:jo}, where $E = \{a,b,c,d\}$. The secret states are  $X_{S}=\{1,2,6\}$, which are the shaded states in Fig.~\ref{fig:jo}. We assume that $G$ is observed by two intruders with different observation projections induced by $E_{o,1}=\{b,c,d\}$ and $E_{o,2}=\{a,c,d\}$, respectively. The observer $obs_1$ and $obs_2$ can be constructed in a standard way \cite{cassandras2008introduction} as shown in Fig.~\ref{fig:obs}. Each state in $obs_i$ contains the current state estimation of intruder $I_i$, $i=1,2$. From Fig.~\ref{fig:obs}, both observers reveal some secrets (the shaded states in Fig.~\ref{fig:obs}) without the opacity-enforcement mechanism. 
\end{example}

Since there is no coordination between the intruders, we follow the procedures in \cite{wu2016synthesis} to construct the all insertion structure (AIS) that encodes all the valid system and insertion function moves for each intruder respectively. It is then possible to extract an insertion function from the AIS.

The AIS $(Q,E_o\cup\{\epsilon\},f,q_0)$ can be seen as a game structure between the system and the insertion function, where $Q=Q_S\cup Q_I$, $Q_S$ denotes the system state set and $Q_I$ denotes the insertion function state set. Each $q\in Q_S$ has a pair of state estimates, the first one is the intruder's estimate, which could be wrong due to the inserted events, and the second estimate is the real system estimate. For each $q\in Q_I$, besides the intruder and system's state estimate, it also consists of current system output from $G$. $f(q,e)=q'$ for $q,q'\in Q$ and $e\in E_o\cup\{\epsilon\}$ represents the transition function, $q_0\in Q_S$ is the initial state. As shown in Fig. \ref{fig:ais}, the rectangles represent the system states  and the ellipses represent the insertion function states. All the transitions with events originated from the system states are system moves that are not controllable, while all the transitions with events from insertion function states are insertion function moves that the intruder actually observes.

\begin{theorem}\cite{wu2014synthesis}\label{thm:AIS}
CSO is privately enforceable if and only if the AIS is nonempty.
\end{theorem}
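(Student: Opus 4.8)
The plan is to prove the biconditional by establishing a tight correspondence between privately enforcing insertion functions and runs of the AIS, treating the AIS as the maximal game structure that survives the removal of all insertion-function states from which safety or admissibility cannot be maintained. Since the AIS is built by pairing the intruder's (possibly corrupted) state estimate with the genuine system estimate and then pruning, I would first make precise the twin invariants that its construction enforces: at every reachable state the intruder estimate is a subset of states that is not contained in $X_S$ (so the observed behavior still lies in $L_{safe}$), and every surviving insertion-function state in $Q_I$ has at least one outgoing insertion move. The whole argument then reduces to showing that these two invariants are exactly the admissibility and private safety conditions of Definition \ref{def:private enforceability}.

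For the sufficiency direction (AIS nonempty $\Rightarrow$ CSO privately enforceable), I would extract an insertion function $f_I$ directly from the surviving structure. Given an observed string $se_o\in P[L(G)]$, the run of the AIS on $s$ reaches a system state $q\in Q_S$; reading the genuine output $e_o$ moves to an insertion-function state $q'\in Q_I$, and because $q'$ survived pruning it has at least one outgoing insertion move, which I select to define $f_I(s,e_o)=s_Ie_o$. Totality of this construction over all of $P[L(G)]$ is precisely admissibility, and it hinges on the fact that pruning removes the whole predecessor whenever an insertion-function state deadlocks, so no reachable genuine output is ever left without a response. Private safety then follows by induction along the run: the estimate invariant guarantees that the intruder estimate produced by each insertion never collapses into $X_S$, so $f_I^{str}(s)\in L_{safe}$ for every $s\in P[L(G)]$, giving $L_{out}\subseteq L_{safe}$.

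For the necessity direction (CSO privately enforceable $\Rightarrow$ AIS nonempty), I would start from any privately enforcing $f_I$ and trace its output word by word to produce a run in the unpruned insertion structure. Admissibility of $f_I$ guarantees that this run never blocks at a system output, and private safety guarantees that the intruder estimate it induces stays outside $X_S$ along every prefix; consequently none of the states visited by this run can be eliminated by the pruning operation, so they all belong to the AIS and in particular the initial state $q_0$ is retained. Hence the AIS is nonempty. Here I would lean on the bounded-insertion assumption stated before Definition \ref{def:private enforceability} to keep the induced run inside a finite structure and to ensure the transition bookkeeping terminates.

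The main obstacle I anticipate is not either implication in isolation but the faithful matching of the two state-estimate components to the language condition $L_{out}\subseteq L_{safe}$: one must verify that the intruder's corrupted estimate, updated under inserted events via the observation equivalence of $E_I$ and $E_o$, coincides state-for-state with the current-state estimate an intruder would form from $f_I^{str}(s)$, while the second (genuine) component simultaneously certifies that $s\in P[L(G)]$. Establishing that the pruning fixpoint terminates and yields a well-defined maximal structure — so that ``nonempty'' is unambiguous — is the other delicate point, and this is exactly where the upper bound on the length of each inserted string $s_I$ is essential, since it renders $Q$ finite and the game solvable by a finite backward computation.
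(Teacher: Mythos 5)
This theorem is stated with a citation to \cite{wu2014synthesis} and is not proved in the paper itself, so there is no in-paper proof to compare against; your sketch reconstructs, in outline, the standard argument from that reference (and \cite{wu2016synthesis}): the AIS is the pruned game structure whose surviving states satisfy exactly the two invariants you name, sufficiency follows by extracting a strategy from any surviving state, and necessity by showing the run induced by a privately enforcing $f_I$ survives every pruning iteration. Your outline is correct and identifies the right delicate points; the only step I would ask you to spell out is the necessity direction's claim that no state reachable under $f_I$ is ever deleted, which needs an explicit induction on the stages of the pruning fixpoint (a state is deleted at stage $k+1$ only if all its safe successors were deleted by stage $k$, contradicting that $f_I$ supplies a surviving successor for every system move).
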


\begin{remark}
The main differences of our paper's AIS definition from \cite{wu2016synthesis} are two folds. The first is that we unfold the moves of the insertion function as well as the intruder's state estimate \emph{event by event}, while in \cite{wu2016synthesis}, the insertion function's move is from $E_o^*\cup\{\epsilon\}$, which could denote the whole string that has been inserted. For example, in our AIS definition, if we have a transition $q_0\xrightarrow{a}q_1\xrightarrow{b}q_2\xrightarrow{c}q_3$, where $q_0,q_1,q_2\in Q_I$ and $q_3\in Q_S$, in \cite{wu2016synthesis}'s definition, the same transition would be simplified to $q_0\xrightarrow{abc}q_3$. The second is that, if the system $G$ contains loops (for the simplest case, imagine there is a self-loop in some state), it could be the case that the inserted string contains $s^*$ for some $s\in E_o^*$ and becomes arbitrarily long (for example, Fig. 7 in \cite{wu2016synthesis}). In our paper, we restrict the inserted strings to be $*$-free, that is, the insertions cannot be arbitrarily long and we replace $s^*$ with $\epsilon$. Our definition with unfolding and $*$-free in insertions are to facilitate the analysis of joint opacity enforcement in Section \ref{sect:joint opacity}. 

\end{remark}
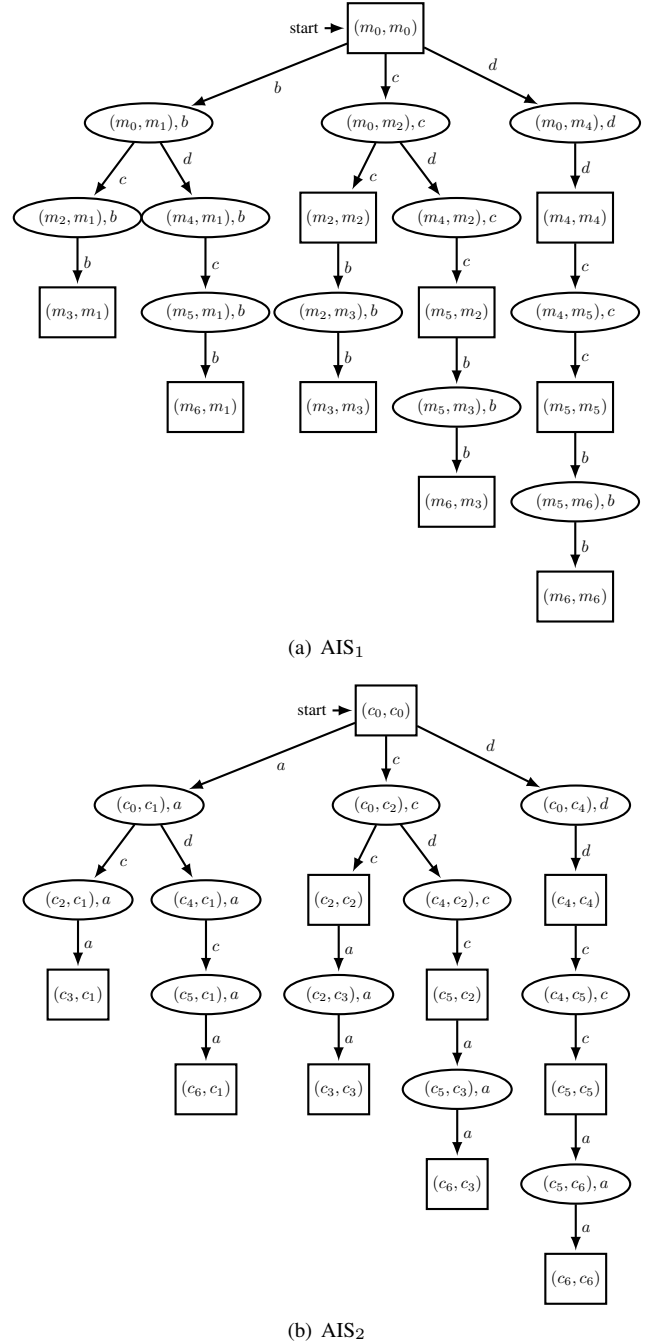
\begin{figure}[H]
	\centering
	\subfigure[AIS$_1$]{		
		\begin{tikzpicture}[shorten >=1pt,node distance=2cm,on grid,auto, bend angle=20, thick,scale=0.63, every node/.style={transform shape}] 
		\node[block,initial] (s0)   {$(m_0,m_0)$}; 
		\node[ellipse,draw] (s1) [below left = 2 cm and 5 cm of s0] {$(m_0,m_1),b$}; 	
		\node[ellipse,draw] (s2) [below = of s0]  {$(m_0,m_2),c$}; 
		\node[ellipse,draw] (s3) [below right= 2 cm and 4 cm of s0] {$(m_0,m_4),d$}; 
		
		\node[ellipse,draw] (s4) [below left= 2 cm and 1.5 cm of s1] {$(m_2,m_1),b$}; 
		\node[block] (s5) [below = of s4] {$(m_3,m_1)$}; 
		\node[ellipse,draw] (s6) [below right =2 cm and 1.2 cm of s1] {$(m_4,m_1),b$};
		\node[ellipse,draw] (s7) [below =  of s6] {$(m_5,m_1),b$};
		\node[block] (s8) [below = of s7] {$(m_6,m_1)$};  
		\node[block] (s9) [below left= 2 cm and 1 cm of s2] {$(m_2,m_2)$}; 
		\node[ellipse,draw] (s10) [below right = 2 cm and 1.5 cm of s2]  {$(m_4,m_2),c$}; 
		\node[block] (s11) [below = of s10]  {$(m_5,m_2)$}; 
		\node[block] (s12) [below of =  s3] {$(m_4,m_4)$}; 
		\node[ellipse,draw] (s13) [below =of s9] {$(m_2,m_3),b$};
		\node[block] (s14) [below =of s13] {$(m_3,m_3)$};
		\node[ellipse,draw] (s15) [below =of s11] {$(m_5,m_3),b$};
		\node[block] (s16) [below =of s15] {$(m_6,m_3)$};
		\node[ellipse,draw] (s17) [below =of s12] {$(m_4,m_5),c$};
		\node[block] (s18) [below =of s17] {$(m_5,m_5)$};
		\node[ellipse,draw] (s19) [below =of s18] {$(m_5,m_6),b$};
		\node[block] (s20) [below =of s19] {$(m_6,m_6)$};
		\path[->]
				(s0) edge node {$b$} (s1)  
				(s0) edge node {$c$} (s2)  
				(s0) edge node {$d$} (s3)  
				(s1) edge node {$c$} (s4)  
				(s1) edge node {$d$} (s6)  
				(s4) edge node {$b$} (s5) 
				(s6) edge node {$c$} (s7) 
				(s7) edge node {$b$} (s8) 
		
				(s2) edge node {$c$} (s9) 		
				(s9) edge node {$b$} (s13) 
				(s13) edge node {$b$} (s14) 
				(s2) edge node {$d$} (s10) 
				(s10) edge node {$c$} (s11) 				
				(s11) edge node {$b$} (s15) 
				(s15) edge node {$b$} (s16) 
				
				(s3) edge node {$d$} (s12) 
				(s12) edge node {$c$} (s17) 
				(s17) edge node {$c$} (s18) 				
				(s18) edge node {$b$} (s19) 
				(s19) edge node {$b$} (s20) 	
																					
		; %end path 	
		\end{tikzpicture} 
	}
	\subfigure[AIS$_2$]{		
		\begin{tikzpicture}[shorten >=1pt,node distance=2cm,on grid,auto, bend angle=20, thick,scale=0.63, every node/.style={transform shape}] 
		\node[block,initial] (s0)   {$(c_0,c_0)$}; 
		\node[ellipse,draw] (s1) [below left = 2 cm and 5 cm of s0] {$(c_0,c_1),a$}; 	
		\node[ellipse,draw] (s2) [below = of s0]  {$(c_0,c_2),c$}; 
		\node[ellipse,draw] (s3) [below right= 2 cm and 4 cm of s0] {$(c_0,c_4),d$}; 
		
		\node[ellipse,draw] (s4) [below left= 2 cm and 1.5 cm of s1] {$(c_2,c_1),a$}; 
		\node[block] (s5) [below = of s4] {$(c_3,c_1)$}; 
		\node[ellipse,draw] (s6) [below right =2 cm and 1.2 cm of s1] {$(c_4,c_1),a$};
		\node[ellipse,draw] (s7) [below =  of s6] {$(c_5,c_1),a$};
		\node[block] (s8) [below = of s7] {$(c_6,c_1)$};  
		\node[block] (s9) [below left= 2 cm and 1 cm of s2] {$(c_2,c_2)$}; 
		\node[ellipse,draw] (s10) [below right = 2 cm and 1.5 cm of s2]  {$(c_4,c_2),c$}; 
		\node[block] (s11) [below = of s10]  {$(c_5,c_2)$}; 
		\node[block] (s12) [below of =  s3] {$(c_4,c_4)$}; 
		\node[ellipse,draw] (s13) [below =of s9] {$(c_2,c_3),a$};
		\node[block] (s14) [below =of s13] {$(c_3,c_3)$};
		\node[ellipse,draw] (s15) [below =of s11] {$(c_5,c_3),a$};
		\node[block] (s16) [below =of s15] {$(c_6,c_3)$};
		\node[ellipse,draw] (s17) [below =of s12] {$(c_4,c_5),c$};
		\node[block] (s18) [below =of s17] {$(c_5,c_5)$};
		\node[ellipse,draw] (s19) [below =of s18] {$(c_5,c_6),a$};
		\node[block] (s20) [below =of s19] {$(c_6,c_6)$};
		\path[->]
		(s0) edge node {$a$} (s1)  
		(s0) edge node {$c$} (s2)  
		(s0) edge node {$d$} (s3)  
		(s1) edge node {$c$} (s4)  
		(s1) edge node {$d$} (s6)  
		(s4) edge node {$a$} (s5) 
		(s6) edge node {$c$} (s7) 
		(s7) edge node {$a$} (s8) 
		
		(s2) edge node {$c$} (s9) 		
		(s9) edge node {$a$} (s13) 
		(s13) edge node {$a$} (s14) 
		(s2) edge node {$d$} (s10) 
		(s10) edge node {$c$} (s11) 				
		(s11) edge node {$a$} (s15) 
		(s15) edge node {$a$} (s16) 
		
		(s3) edge node {$d$} (s12) 
		(s12) edge node {$c$} (s17) 
		(s17) edge node {$c$} (s18) 				
		(s18) edge node {$a$} (s19) 
		(s19) edge node {$a$} (s20) 	
		
		; %end path 	
		\end{tikzpicture} 
	}
	\caption{AISs for intruders $I_1$ and $I_2$.}
	\vspace{-3.5mm}
	\label{fig:ais}
\end{figure}
The AISs for intruders $I_1$ and $I_2$ in our motivating example are as shown in Fig \ref{fig:ais}, where $m_0=\{0,1\},m_1=\{2\},m_2=\{3,4\},m_3=\{5,6\},m_4=\{7\},m_5=\{8,9\},m_6=\{10\}$, $c_0=\{0,2\},c_1=\{1\},c_2=\{3,5\},c_3=\{4,6\},c_4=\{7\},c_5=\{8,10\},c_6=\{9\}$. For instance, in AIS$_1$ shown in Fig. \ref{fig:ais} (a), starting from the initial state where the intruder and the system's estimates are $(m_0,m_0)$. If the event $b$ occurs in the system, AIS$_1$ transits to the insertion function state $((m_0,m_1),a)$ since the system observer sees the event $b$ and the intruder observer observes nothing as the insertion has not been decided yet. Then if the insertion function decides to insert $c$, the system transits to the insertion function state $((m_2,m_1),c)$ as the intruder observer observes $c$ and the system observer will ignore the insertion function outputs. Then the real system output $b$ is appended and consequently the AIS transits to the system state $(m_3,m_1)$.

\begin{theorem}
	Given the system $G$ and $N$ intruders with observation mask $P_i$, $i\in\mathcal{N}$, D-CSO is privately enforceable if and only if AIS$_i$ is nonempty for all $i\in\mathcal{N}$.
\end{theorem}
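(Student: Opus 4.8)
The plan is to decouple the decentralized enforcement problem into $N$ independent centralized enforcement problems and then invoke Theorem \ref{thm:AIS} componentwise. First I would make precise what ``D-CSO is privately enforceable'' means in the non-coordinating setting of Fig. \ref{fig:decentralized}: since the intruders never share observations, each intruder $I_i$ sees only the output of its own dedicated insertion function $f^i$, so D-CSO is privately enforceable exactly when there exists a tuple $(f^1,\dots,f^N)$ such that, for every $i\in\mathcal{N}$, the function $f^i$ is privately enforcing (in the sense of Definition \ref{def:private enforceability}) with respect to the projection $P_i$, the secret set $X_S$ and the non-secret set $X_{NS}$.

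The key observation is that this tuple exists if and only if each $f^i$ exists independently. The admissibility and private-safety conditions imposed on $f^i$ --- namely $L_{out}^i\subseteq L_{safe}^i$ with $L_{safe}^i=P_i[L(G)]\backslash(P_i[L(G)]\backslash P_i(L_{NS}))E_{o,i}^*$ --- depend only on the data $(G,P_i,X_S,X_{NS})$ of intruder $i$ and on the output of $f^i$ alone. Because the channel observed by $I_j$ is generated solely by $f^j$ together with the system behavior, the requirement placed on $f^i$ neither constrains nor is constrained by the choice of $f^j$ for $j\neq i$. Hence the conjunction of the $N$ private-enforceability requirements decouples into $N$ separate existence problems, one per intruder.

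With the decoupling in hand, the theorem reduces to the single-intruder case. By Definition \ref{def:DCSO}, D-CSO restricted to intruder $I_i$ is exactly CSO with respect to $P_i$, $X_S$, $X_{NS}$, and the private enforceability of this CSO via an insertion function is precisely the notion governed by Theorem \ref{thm:AIS}. Applying that theorem to each channel, CSO for $I_i$ is privately enforceable if and only if AIS$_i$ is nonempty. For the forward direction, if D-CSO is privately enforceable then each $f^i$ exists, whence each AIS$_i$ is nonempty; for the converse, nonemptiness of every AIS$_i$ lets us extract a privately enforcing $f^i$ from AIS$_i$ for each $i$, and the resulting tuple $(f^1,\dots,f^N)$ then privately enforces D-CSO.

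I expect the main obstacle to lie not in any calculation but in rigorously justifying the decoupling step: one must argue that the absence of coordination makes the $N$ enforcement problems genuinely independent, so that solving them separately is both necessary and sufficient. Once this independence is established from the decentralized architecture of Fig. \ref{fig:decentralized}, the remainder follows immediately by componentwise application of Theorem \ref{thm:AIS}.
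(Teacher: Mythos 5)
Your proposal is correct and follows essentially the same route as the paper: reduce D-CSO to local CSO for each intruder $I_i$ and then apply Theorem \ref{thm:AIS} componentwise to each AIS$_i$. The paper states the decoupling in a single sentence, whereas you justify it more explicitly from the non-coordinating architecture, but the argument is the same.
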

\begin{proof}
On the one hand, D-CSO holds if and only if local CSO holds for all $ i\in \mathcal{N}$. On the other hand, for each intruder $I_i$, local CSO is privately enforceable if and only if the AIS$_i$ is not empty by Theorem \ref{thm:AIS}. Therefore the proof is completed.
\end{proof}

\section{Synthesis of Insertion Functions for Joint Opacity Enforcement}\label{sect:joint opacity}

Rather than observing the same system without coordination, in many applications, intruders do coordinate among themselves by exchanging their estimates of the system's states. For these applications, decentralized opacity notions of coordinated intruders need further investigation.

\begin{figure}[h]
\centering
\includegraphics[scale=0.4]{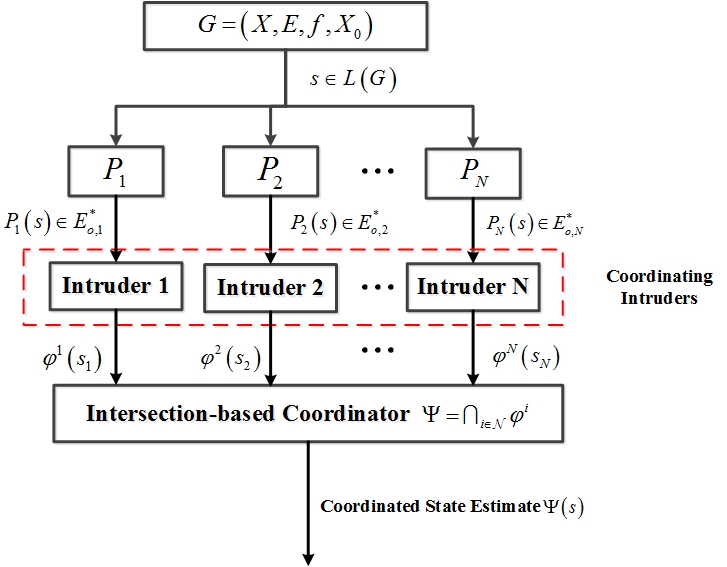}\\
\caption{The DES $G$ observed by intruders with intersection-based coordination protocols.}\label{dec_cor}
\vspace{-5mm}
\end{figure}

\subsection{Intersection-based Coordination Protocol}
In this section, we investigate intruders that may coordinate with each other via an {\it intersection-based} protocol \cite{paoli2012decentralized}. As shown in Fig.~\ref{dec_cor}, we assume that the team of intruders $I_i$, $i\in\mathcal{N}$ not only generate local state estimates but report the estimates to a {\it coordinator} as well. The coordinator has no knowledge about the system. It forms the so-called {\it coordinated estimate} by taking the intersection of the local estimates it receives. The communication from the local intruders to the coordinator is assumed to have no delay. The collaboration is restricted by the following rules: (1) intruders have no knowledge of the projections of one another; (2) the only collaboration between the intruders is through the coordinator.

Before proceeding to opacity issues in the coordinated decentralized framework, we first study the intersection-based coordination protocol in Fig.~\ref{dec_cor}. For the intruder $I_i$, $i\in\mathcal{N}$, a string-based local estimation map $\psi^i: P_i[L(G)]\to 2^X$ is defined as follows: for $s\in L(G)$ and $s_i:=P_i(s)$, $\psi^i(s_i)=f\left(x_0, P_i^{-1}(s_i)\cap L(G)\right)$.

Then, we define an {\it intersection-based} coordination protocol $\Psi: \Pi_{i\in\mathcal{N}} P_i[L(G)] \to 2^X$ as
\begin{equation}\label{eqn:int}
\Psi(s_1,s_2,\cdots,s_N)=\bigcap_{i\in\mathcal{N}}\psi^i(s_i)
\end{equation}

Intuitively, the coordination protocol $\Psi$ takes the intersection of the local estimates reported by the intruders and forms a {\it coordinated estimate} accordingly.

\subsection{Enforcement Scheme of Joint Opacity in Discrete-event Systems}
We now consider opacity issues of DES that can be observed by intruders following the intersection-based coordination protocol in Eq.~\ref{eqn:int}. Roughly speaking, the system is said to be {\it jointly current-state opaque} if no coordinated estimate ever reveals the secret information. 

\begin{definition}[Joint CSO (J-CSO)]
Given the set of observable events $E_{o,i}\subseteq E$ for intruder $I_i$, $i\in \mathcal{N}$, the secret state set $X_S$, the non-secret state set $X_{NS}$ and the intersection-based coordination protocol $\Psi$, the system $G=(X,E,f,X_0)$ is said to be jointly current-state opaque with respect to $E_{o,i}$, $i\in \mathcal{N}$, $X_S$ and $X_{NS}$ if for each intruder, local CSO holds and
\begin{equation}
\Psi(s_1,s_2,\cdots,s_N)\cap X_S\ne\emptyset \land \Psi(s_1,s_2,\cdots,s_N)\cap X_{NS}\ne\emptyset
\end{equation}
\end{definition}

In this paper, we present a centralized approach to synthesize the individual insertion functions to enforce J-CSO. The following example shows that, in general, local insertion functions that enforce D-CSO of a system may not enforce J-CSO.

\begin{example}\label{exp:D-CSO does not imply J-CSO}
With AIS$_1$ and AIS$_2$ in Fig. \ref{fig:ais}, D-CSO is guaranteed in Example 1.  However, if the two intruders can send their estimates to the intersection-based coordinator, joint opacity may be violated. For instance, from Fig.~\ref{fig:ais}, if the string $cab$ happens in the system, it will be projected to be $cb$ and $ca$ for intruders $1$ and $2$, respectively. If both insertion functions choose not to insert anything, which are valid moves from their local AISs, the resulting estimates reported by intruders $1$ and $2$, after observing $cb$ and $ca$, are $\{5,6\}$ and $\{4,6\}$, respectively. As our coordinator performs the intersection of the estimation, it will result in $\{6\}\in X_S$, which reveals a secret. 
\end{example}

Example \ref{exp:D-CSO does not imply J-CSO} implies that insertion functions that enforce D-CSO do not necessarily guarantee J-CSO. Therefore, the insertion functions need to be specifically coordinated to enforce the J-CSO. 

Our first step is to encode the AIS into a corresponding Nondeterministic Finite-state Mealy machine (NFM) for a concise representation.  
\begin{definition}
	An NFM is a 5-tuple 
	\begin{equation}
	 \mathcal{M}=(Q,\Sigma_{In},\Sigma_{Out},q_0,f_{NMF}),   
	\end{equation}
	where $Q$ is the set of states, $\Sigma_{In}$ and $\Sigma_{Out}$ are the sets of input and output symbols, respectively, $q_0\in Q$ is the initial state, $f_{NMF}(q,e)=(q',o)$ defines the transition and input output relation for $q,q'\in Q, e\in\Sigma_{In},o\in\Sigma_{Out}$. 
\end{definition}

The nondeterminism of an NFM comes from the fact that in general $|f_{NMF}(q,e)|\ge 1$, which implies that the same input on the same state may result in non-unique insertions and transit to different states. Our NFM formulation is similar to the insertion automaton \cite{wu2016synthesis} but we allow nondeterministic choices of insertions upon observing a system output $e\in E_o$. The procedure to convert an AIS into an NFM $\mathcal{M}=(Q,\Sigma_{In},\Sigma_{Out},q_0,f)$ is as follows. $Q$ is the set of all the systems states of AIS. $\Sigma_{In}$ is the set of all the events from system states and $\Sigma_{Out}$ is the set of all the possible insertion strings. The transition function is defined as $f(q,e)=(q',o)$, where $o = o'+e, o'\in E_I^*, e\in E_o$, $o'$ is the inserted string, $e$ is the system input and $o$ denotes the output from the state $q$ when the system input is $e$. 

\begin{example}
Fig.~\ref{fig:nfm} denotes the NFMs corresponding to AIS$_1$ and AIS$_2$ in Fig.~\ref{fig:ais}, respectively. Note that, different from AIS, in the NFM formulation, upon observing an event $e$, the state directly jumps from $q$ to $q'$ while outputting the string $o$. However, what really happens, as shown in AIS, is that the intruder's estimation is updated event by event for each output of the insertion function. Such estimation evolution is omitted in the NFM formulation for conciseness but can be recovered from our AIS. 
\end{example}
\begin{figure}[ht]
	\centering
	\subfigure[$NFM_1$]{		
		\begin{tikzpicture}[shorten >=1pt,node distance=2cm,on grid,auto, bend angle=20, thick,scale=0.7, every node/.style={transform shape}] 
		\node[block,initial above] (s0)   {$(m_0,m_0)$}; 
		\node[block] (s1) [left = 3cm  of s0] {$(m_3,m_1)$}; 	
		\node[block] (s2) [below left = 2 cm and 3 cm of s0]  {$(m_6,m_1)$}; 
		\node[block] (s3) [right = 3 cm of s2] {$(m_2,m_2)$}; 
		
		\node[block] (s4) [right = 3 cm of s3] {$(m_5,m_2)$}; 
		\node[block] (s5) [right = 6 cm of s0] {$(m_4,m_4)$}; 
		\node[block] (s6) [below =  of s3] {$(m_3,m_3)$};
		\node[block] (s7) [below =  of s4] {$(m_6,m_3)$};
		\node[block] (s8) [below =  of s5] {$(m_5,m_5)$};  
		\node[block] (s9) [below =  of s8] {$(m_6,m_6)$}; 
		
		\path[->]
				(s0) edge node [above] {$b/cb$} (s1)  
				(s0) edge node [above] {$b/dcb$} (s2)  
				(s0) edge node {$c/c$} (s3)  
				(s0) edge node {$c/dc$} (s4)  
				(s0) edge node {$d/d$} (s5)  
				(s3) edge node {$b/b$} (s6) 
				(s4) edge node {$b/b$} (s7) 
				(s5) edge node {$c/c$} (s8) 
				(s8) edge node {$b/b$} (s9)

		; %end path 	
		\end{tikzpicture} 
	}
	\subfigure[$NFM_2$]{		
		\begin{tikzpicture}[shorten >=1pt,node distance=2cm,on grid,auto, bend angle=20, thick,scale=0.65, every node/.style={transform shape}] 
		\node[block,initial above] (s0)   {$(c_0,c_0)$}; 
		\node[block] (s1) [left = 3cm  of s0] {$(c_3,c_1)$}; 	
		\node[block] (s2) [below left = 2 cm and 3 cm of s0]  {$(c_6,c_1)$}; 
		\node[block] (s3) [right = 3 cm of s2] {$(c_2,c_2)$}; 
		
		\node[block] (s4) [right = 3 cm of s3] {$(c_5,c_2)$}; 
		\node[block] (s5) [right = 6 cm of s0] {$(c_4,c_4)$}; 
		\node[block] (s6) [below =  of s3] {$(c_3,c_3)$};
		\node[block] (s7) [below =  of s4] {$(c_6,c_3)$};
		\node[block] (s8) [below =  of s5] {$(c_5,c_5)$};  
		\node[block] (s9) [below =  of s8] {$(c_6,c_6)$}; 
		
		\path[->]
				(s0) edge node [above] {$a/ca$} (s1)  
				(s0) edge node [above] {$a/dca$} (s2)  
				(s0) edge node {$c/c$} (s3)  
				(s0) edge node {$c/dc$} (s4)  
				(s0) edge node {$d/d$} (s5)  
				(s3) edge node {$a/a$} (s6) 
				(s4) edge node {$a/a$} (s7) 
				(s5) edge node {$c/c$} (s8) 
				(s8) edge node {$a/a$} (s9)

		; %end path 	
		\end{tikzpicture} 
	}
	\caption{The NFMs converted from AISs.}
	\vspace{-3.5mm}
	\label{fig:nfm}
\end{figure}
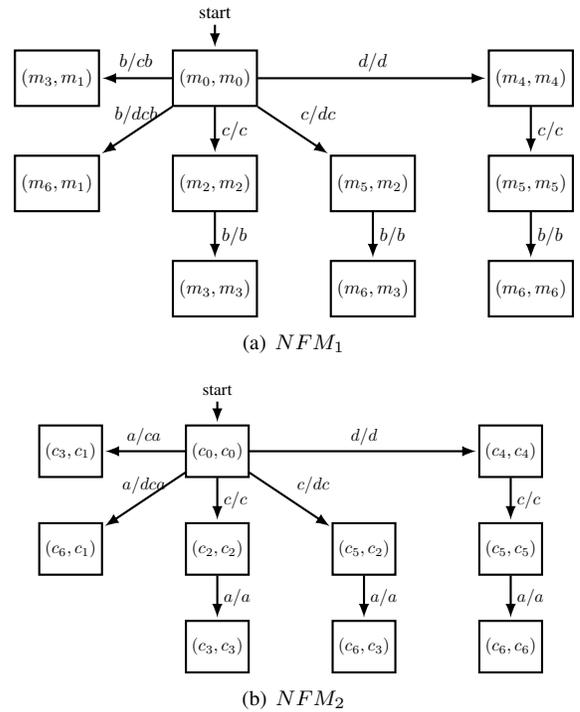

To keep the NFMs synchronized with the original system that intruders try to compromise, we construct another system observer $obs$ as a DFA with $E_{obs,o} = \bigcup_{i\in\mathcal{N}} E_{o,i}$. That is, if an event is observable to any one of the intruders, it is observable to this system observer. In our example, $E_{obs,o}=E$ and the observer has the identical structure with the original system as shown in Fig.~\ref{fig:jo}. The observer $obs$ can be viewed as an NFM that outputs empty string $\epsilon$ for all inputs. Given $N$ AISs' in the form of NFM $\mathcal{M}_i=(Q_i,\Sigma_{In}^i,\Sigma_{Out}^i,q^i_0,f_i)$ for $i\in\mathcal{N}$ and the system observer $obs$, we can obtain the composed NFM $\mathcal{G}=(Q_G,\Sigma_{In},\Sigma_{Out},q_0,f)$ that describes all the possible combined insertion behaviors,  where $Q= Q_{1}\times Q_2 ... \times Q_{N}\times Q_{obs}$, $\Sigma_{In}=\Sigma_{obs,In}=E_{obs,o}$, $\Sigma_{Out}=\Sigma_{1,Out}\times\Sigma_{2,Out}...\times\Sigma_{N,Out}\times\{\epsilon\}$. The transition relation $f((q_1,q_2,...,q_N,q_{obs}),e)=((q'_1,q'_2,...,q'_N,q'_{obs}),o)$ is given by 
\begin{itemize}
	\item $o=(o_1,...,o_N,\epsilon)$
	\item $(q'_{obs},\epsilon)=f_{obs}(q_{obs},e)$
	\item $(q'_{i},o_i)=f_i(q_{i},e)$ if $e\in E_{o,i}$; otherwise $(q'_{i},o_i)=(q_i,\epsilon)$
\end{itemize}

While constructing this product NFM $\mathcal{G}$, we assume that when an event $e\in\Sigma_o$ in the system occurs, it is guaranteed that for every intruder $I_i$ such that $e\in E_{o,i}$ holds, its corresponding insertion function will finish outputting the modified string $o_i$ before the next system event $e'\in E_o$ is generated. It is always possible since we restrict the output of the insertion functions to be $*$-free. In this regard, every insertion function is synchronized with the system inputs. 

\begin{example}
Fig.~\ref{fig:nfm_all} illustrates the $\mathcal{G}$ from Example 2 and the corresponding $NFM_1$ and $NFM_2$ in Fig.~\ref{fig:nfm}. For simplicity, in this figure we omit the constant output $\epsilon$ from the system observer as well as each individual observer's state estimation.
\end{example}

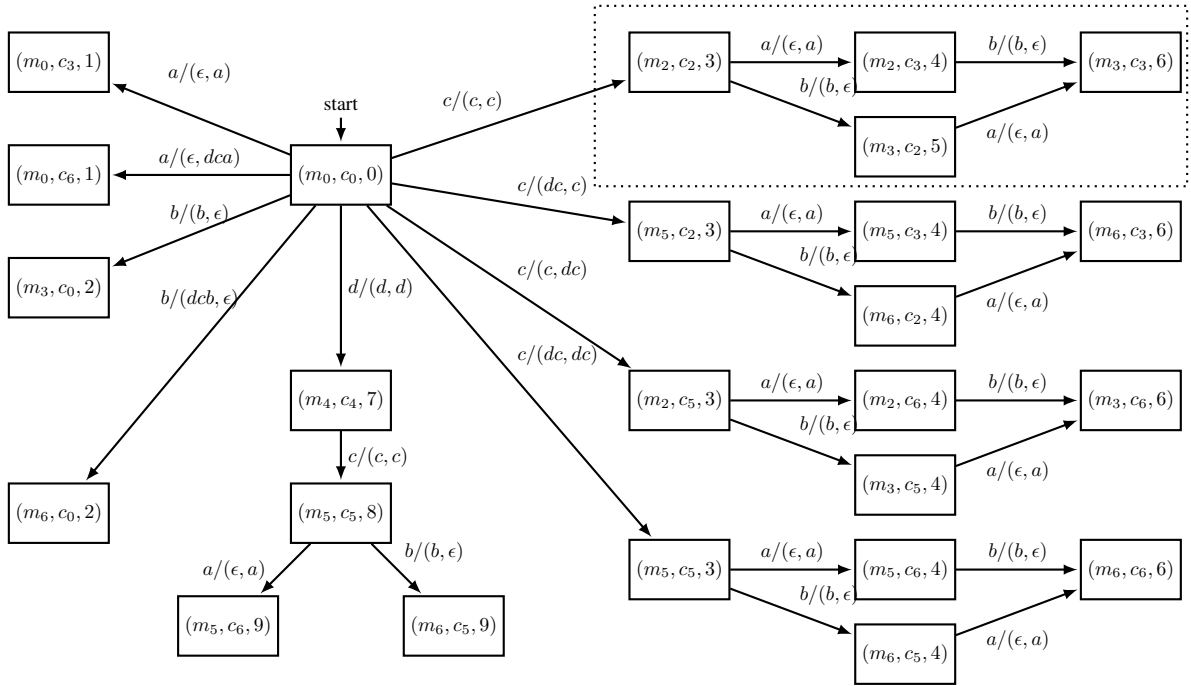
\begin{figure*}
    \centering
		\begin{tikzpicture}[shorten >=1pt,node distance=2cm,on grid,auto, bend angle=20, thick,scale=0.75, every node/.style={transform shape}] 
		\node[block,initial above] (s0)   {$(m_0,c_0,0)$}; 
		\node[block] (s1) [above left  = 2cm and 5 cm  of s0] {$(m_0,c_3,1)$}; 	
		\node[block] (s2) [below = 2cm of s1]  {$(m_0,c_6,1)$}; 
		\node[block] (s3) [below = 2cm of s2] {$(m_3,c_0,2)$}; 
		
		\node[block] (s4) [below = 4cm of s3] {$(m_6,c_0,2)$}; 
		\node[block] (s5) [below = 4cm of s0] {$(m_4,c_4,7)$}; 
		\node[block] (s6) [below =  of s5] {$(m_5,c_5,8)$};
		\node[block] (s7) [below left = 2 cm and 2cm of s6] {$(m_5,c_6,9)$};
		\node[block] (s8) [below right = 2cm and 2cm of s6] {$(m_6,c_5,9)$};  
		\node[block] (s9) [above right = 2 cm and 6 cm of s0] {$(m_2,c_2,3)$}; 
		\node[block] (s10) [below = 3 cm of s9] {$(m_5,c_2,3)$}; 
		\node[block] (s11) [below = 3 cm  of s10] {$(m_2,c_5,3)$};
		\node[block] (s12) [below = 3 cm  of s11] {$(m_5,c_5,3)$};
		\node[block] (s13) [right = 4 cm of s9] {$(m_2,c_3,4)$};  
		\node[block] (s14) [below = 1.5 cm of s13] {$(m_3,c_2,5)$}; 
		\node[block] (s15) [below = 1.5 cm of s14] {$(m_5,c_3,4)$}; 
		\node[block] (s16) [below = 1.5 cm of s15] {$(m_6,c_2,4)$}; 
		\node[block] (s17) [below = 1.5 cm of s16] {$(m_2,c_6,4)$}; 
		\node[block] (s18) [below = 1.5 cm of s17] {$(m_3,c_5,4)$}; 
		\node[block] (s19) [below = 1.5 cm of s18] {$(m_5,c_6,4)$}; 
		\node[block] (s20) [below = 1.5 cm of s19] {$(m_6,c_5,4)$}; 
		\node[block] (s21) [right = 4 cm of s13] {$(m_3,c_3,6)$}; 
		\node[block] (s22) [below = 3 cm of s21] {$(m_6,c_3,6)$}; 
		\node[block] (s23) [below = 3 cm of s22] {$(m_3,c_6,6)$}; 
		\node[block] (s24) [below = 3 cm of s23] {$(m_6,c_6,6)$}; 
		\path[->]
				(s0) edge node [above=0.5cm] {$a/(\epsilon,a)$} (s1)  
				(s0) edge node [above] {$a/(\epsilon,dca)$} (s2)  
				(s0) edge node [above] {$b/(b,\epsilon)$} (s3)  
				(s0) edge node [above=0.5cm] {$b/(dcb,\epsilon)$} (s4)  
				(s0) edge node {$d/(d,d)$} (s5)  
				(s5) edge node {$c/(c,c)$} (s6) 
				(s6) edge node [left = 0.2 cm] {$a/(\epsilon,a)$} (s7) 
				(s6) edge node {$b/(b,\epsilon)$} (s8) 
				(s0) edge node {$c/(c,c)$} (s9) 
				(s0) edge node {$c/(dc,c)$} (s10) 
				(s0) edge node {$c/(c,dc)$} (s11) 
				(s0) edge node {$c/(dc,dc)$} (s12) 
				(s9) edge node {$a/(\epsilon,a)$} (s13) 
				(s9) edge node {$b/(b,\epsilon)$} (s14) 
				(s10) edge node {$a/(\epsilon,a)$} (s15) 
				(s10) edge node {$b/(b,\epsilon)$} (s16) 
				(s11) edge node {$a/(\epsilon,a)$} (s17) 
				(s11) edge node {$b/(b,\epsilon)$} (s18) 
				(s12) edge node {$a/(\epsilon,a)$} (s19) 
				(s12) edge node {$b/(b,\epsilon)$} (s20) 
				(s13) edge node {$b/(b,\epsilon)$} (s21) 
				(s14) edge node [below = 0.2cm]{$a/(\epsilon,a)$} (s21) 
				(s15) edge node {$b/(b,\epsilon)$} (s22) 
				(s16) edge node [below = 0.2cm]{$a/(\epsilon,a)$} (s22) 
				(s17) edge node {$b/(b,\epsilon)$} (s23) 
				(s18) edge node [below = 0.2cm]{$a/(\epsilon,a)$} (s23) 
				(s19) edge node {$b/(b,\epsilon)$} (s24) 
				(s20) edge node [below = 0.2cm]{$a/(\epsilon,a)$} (s24) 
	
		; %end path 	
\draw[thick,dotted]     (4.5,-0.2) rectangle (15,3);
		\end{tikzpicture} 
    \caption{The NFM $\mathcal{G}$, the states in the dashed box are pruned}
    \vspace{-3.5mm}
    \label{fig:nfm_all}
\end{figure*}

For a given transition $f((q_0,q_1,...,q_{n-1},q_{obs}),e)=((q'_0,q'_1,...,q'_{n-1},q'_{obs}),o)$, it is then possible to check whether the secrets will be revealed and joint opacity could be violated during this transition with the help of the AISs, since as mentioned earlier, the event by event evolution of the state estimation for each intruder upon observing a modified string is omitted in the NFM but not AIS. 

For example, in the NFM from Fig. \ref{fig:nfm_all}, starting from the initial state, when the event $c$ happens and the insertion functions decide to insert $d$ and $\epsilon$ respectively, the transition is $(m_0,c_0,0)\xrightarrow{c/(dc,c)}(m_5,c_2,3)$. From the AISs in Fig. \ref{fig:ais}, the evolution of each intruder's estimation can be see as a two step transition
$(m_0,c_0,0)\xrightarrow{d,c}(m_4,c_2,3)\xrightarrow{c,\epsilon}(m_5,c_2,3)$. Note that there is an intermediate state $(m_4,c_2,3)$ that is not shown in $\mathcal{G}$. In the first step, upon observing the system event $c$, the first insertion function outputs $d$ and the second insertion function, since it decides to insert nothing, the system event $c$ is directly outputted. Therefore, the estimations evolve from $m_0$ to $m_4$, $c_0$ to $c_2$, and the system observer's estimation changes from $0$ to $3$. In the second step, the first insertion function outputs the system event $c$ and the second insertion function outputs $\epsilon$. It can be seen that our assumption is that the event output (including $\epsilon$) for each intruder is synchronized.

To determine if a transition $(q,e)\rightarrow (q',o)$ in $\mathcal{G}$ is safe, we examine every intermediate state from the AISs that evolves with each output event to see if the joint estimation reveals a secret. By definition, $q_{obs}$ encodes the \emph{true} set of states that the system is currently in. The first element of  $AIS_i$ state --- we denote as $est_i$ --- represents each \emph{intruder's estimation} of the current state. According to our coordination rule, the joint estimation $J_{est}$ is then obtained by taking intersection among $q_{obs}$ and $est_i,i=1,\ldots,N$. 
$$
J_{est} = q_{obs}\cap est_0\cap ... \cap est_{n-1}
$$

\begin{proposition}
	$J_{est}$ does not reveal a secret if  $J_{est}\cap X_S\neq\emptyset\implies J_{est}\cap X_{NS}\neq\emptyset$
\end{proposition}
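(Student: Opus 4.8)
The plan is to recognize that the asserted implication is precisely the logical negation of the coordinator's disclosure event, so the proposition reduces to unpacking what ``revealing a secret'' means for the coordinated estimate $J_{est}$, together with the complementarity of $X_S$ and $X_{NS}$. No genuine computation is required; the work is entirely in fixing the semantics correctly.

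First I would pin down disclosure. By the definition of current-state opacity and the role of the coordinator, the coordinated adversary concludes \emph{unambiguously} that a secret state is current exactly when its estimate rules out every non-secret possibility, i.e.\ when $J_{est}$ is nonempty and every one of its elements lies in $X_S$. Thus $J_{est}$ \emph{reveals a secret} if and only if $\emptyset \ne J_{est} \subseteq X_S$. Here I would note the role of the factor $q_{obs}$ in $J_{est} = q_{obs}\cap est_0\cap\cdots\cap est_{n-1}$: since $q_{obs}$ encodes the \emph{true} set of states the system can occupy after the observed run, intersecting with it keeps the disclosure test sound, restricting attention to states the coordinated estimate can genuinely concern rather than phantom states suggested by inserted events.

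Next I would invoke the standing assumption $X_{NS}=X\setminus X_S$, which gives the disjoint partition $X = X_S \,\dot\cup\, X_{NS}$. For any $J_{est}\subseteq X$ this yields the elementary equivalence $J_{est}\subseteq X_S \iff J_{est}\cap X_{NS}=\emptyset$, because every state lying outside $X_S$ is by definition a non-secret state. Combining this with the first step, $J_{est}$ reveals a secret if and only if $J_{est}\cap X_S\ne\emptyset$ and $J_{est}\cap X_{NS}=\emptyset$.

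Finally I would negate both sides by propositional logic: $J_{est}$ does \emph{not} reveal a secret iff $J_{est}\cap X_S=\emptyset$ or $J_{est}\cap X_{NS}\ne\emptyset$, which is exactly the implication $J_{est}\cap X_S\ne\emptyset \Rightarrow J_{est}\cap X_{NS}\ne\emptyset$. This establishes the proposition (indeed the stronger biconditional, from which the stated ``if'' direction follows immediately). The only care-point, rather than a computational obstacle, is the justification in the first step: that the coordinator's disclosure criterion is precisely ``$J_{est}$ lies entirely within $X_S$,'' and that intersecting the local estimates with $q_{obs}$ is the correct coordinated object to test against the secret. Once this semantic identification is granted, the remainder is just the complementarity equivalence together with De~Morgan.
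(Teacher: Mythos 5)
Your argument is correct and matches the paper's (implicit) treatment: the paper states this proposition without proof, treating it as the definitional unpacking of ``revealing a secret'' (nonempty estimate contained entirely in $X_S$) combined with the complementarity $X_{NS}=X\setminus X_S$ and a contrapositive, which is exactly what you spell out. Your additional remark on the role of $q_{obs}$ in $J_{est}$ is consistent with the paper's semantics and adds no gap.
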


We now define J-CSO in the presence of synthesized insertion functions as follows.
\begin{definition}
	Given $N$ intruders with unobservable event sets $E_{uo,1},E_{uo,2},...,E_{uo,N}$ and their insertion functions $f_{I,1},f_{I,2},...,f_{I,N}$, the system is J-CSO against the intruders if
	\begin{itemize}
		\item For each individual intruder $i$, the insertion function $f_{I,i}$ enforces local CSO.
		\item The $J_{est}$ never reveals the secret.
	\end{itemize}
	Furthermore, we define J-CSO to be jointly privately enforceable if all individual insertion functions are locally privately enforcing and $J_{est}$ never reveals the secret.
\end{definition}

An intermediate state is unsafe if its $J_{est}$ reveals a secret. A transition $(q,e)\rightarrow (q',o)$ in $\mathcal{G}$ is unsafe if any of the intermediate states between $q$ and $q'$ is unsafe. Similarly, any state $q\in Q$ of $\mathcal{G}$ is unsafe if its $J_{est}$ reveals a secret. If a transition is found to be unsafe, it will be pruned. If a state $q\in Q$ is found to be unsafe, this state, together with all its incoming and outgoing transitions, will be pruned. If after the pruning, at some state $q'\in Q$, there is no incoming transition (except the initial state) or there is no outgoing transition defined on an event $e$ that could happen in this state, which implies that the system blocks when $e$ happens at $q'$ since there is no insertion function available, then such state is also unsafe and all its incoming and outgoing transitions will be pruned. Again, such pruning may trigger new deadlocks and create unsafe states. Therefore, this is an iterative process until no unsafe state is found or the initial state is pruned. 

For example, as shown in Fig. \ref{fig:nfm_all}, the state $(m_3,c_3,6)$ is an unsafe state that reveals the secret. Because $m_3\cap c_3\cap \{6\}=\{5,6\}\cap\{4,6\}\cap\{6\}=\{6\}\in X_S$. Therefore it has to be pruned, which results in the states $(m_2,c_3,4)$ and $(m_3,c_2,5)$ being unsafe since there are no outgoing transitions any more. Consequently, pruning $(m_2,c_3,4)$ and $(m_3,c_2,5)$ makes $(m_2,c_2,3)$ unsafe. After deleting $(m_2,c_2,3)$, the pruning process stops. Since no more state or transitions is found to be unsafe, the resulting $\mathcal{G}$ can be found in Fig. \ref{fig:nfm_all}, excluding the states in the dashed box.

\begin{theorem}
	Given the system model $G$ and $N$ intruders with observation projections $P_i$, $i\in\mathcal{N}$, J-CSO is jointly privately enforceable if and only if $\mathcal{G}$ is nonempty after pruning.
\end{theorem}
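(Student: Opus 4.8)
The plan is to prove the equivalence by viewing $\mathcal{G}$ as a game arena in which the system chooses feasible events and the insertion functions jointly respond with insertions, and to relate non-emptiness of the pruned $\mathcal{G}$ to the existence of a tuple of admissible, safe local insertion functions. The first step is to make the correspondence precise: a transition $f((q_1,\dots,q_N,q_{obs}),e)=((q_1',\dots,q_N',q_{obs}'),o)$ selects, for each intruder $i$ with $e\in E_{o,i}$, an insertion output $o_i$ consistent with $\mathcal{M}_i$ (hence with AIS$_i$), so any surviving subautomaton projects to a tuple of local strategies and, conversely, any tuple of locally AIS$_i$-compatible insertion functions induces a sub-behavior of $\mathcal{G}$. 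The technical point to handle here is that a single $\mathcal{G}$-transition hides the event-by-event unfolding recorded in AIS$_i$; I would invoke the unfolding and $*$-free properties stressed in the Remark so that every intermediate joint estimate along a transition is well-defined and inspected by the pruning, as in the two-step evolution $(m_0,c_0,0)\to(m_4,c_2,3)\to(m_5,c_2,3)$.

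For the direction ``$\mathcal{G}$ nonempty after pruning $\Rightarrow$ jointly privately enforceable,'' I would read off local insertion functions from the pruned $\mathcal{G}$ by projection onto each $\mathcal{M}_i$ component. The pruning rules guarantee three properties at once: (i) no surviving state is blocking, so for every reachable joint state and every feasible system event $e$ an outgoing transition is defined, yielding \emph{admissibility} of each $f_{I,i}$; (ii) every surviving state and every intermediate state along a surviving transition has $J_{est}$ not revealing the secret, which is precisely the joint-safety clause of J-CSO; and (iii) since each $\mathcal{M}_i$ is extracted from AIS$_i$, which by Theorem~\ref{thm:AIS} encodes only privately enforcing moves, the projected local function still enforces local CSO (private safety). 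Together these give joint private enforceability.

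For the converse I would start from insertion functions $f_{I,1},\dots,f_{I,N}$ witnessing joint private enforceability and argue the initial state cannot be pruned. Each $f_{I,i}$ corresponds to a run in AIS$_i$ by Theorem~\ref{thm:AIS}, so the joint execution traces a subgraph $H\subseteq\mathcal{G}$ rooted at $q_0$. Admissibility makes $H$ non-blocking, while local CSO and the never-reveal property of $J_{est}$ make every state and every intermediate state of $H$ safe; hence no pruning rule can delete a state of $H$, so $q_0$ survives and $\mathcal{G}$ remains nonempty after pruning.

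The main obstacle I anticipate is the fixpoint argument underlying the iterative pruning: I must show that the process converges to the maximal safe, non-blocking substructure and that non-emptiness of this fixpoint is necessary and sufficient for a \emph{globally consistent} choice of local strategies. The subtlety is that insertions are chosen per intruder while safety is a joint intersection condition, so a move that is safe for one intruder's AIS may still create an unsafe $J_{est}$; the crux is to verify that the deadlock-propagation step removes exactly those joint configurations admitting no simultaneously admissible and jointly safe completion, and nothing more, which is precisely where the unfolded, $*$-free AIS representation is needed.
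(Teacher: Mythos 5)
Your argument follows essentially the same route as the paper's own proof: both directions rest on the facts that $\mathcal{G}$, as the product of the AIS$_i$ (each nonempty exactly when local CSO is privately enforceable, by Theorem~\ref{thm:AIS}), encodes precisely the tuples of privately enforcing local insertion functions, and that the pruning removes exactly the joint configurations whose $J_{est}$ reveals the secret or which become blocking. Your elaborations --- the event-by-event intermediate states, the non-blocking/admissibility link, and the pruning fixpoint --- make explicit what the paper's shorter proof leaves implicit, but the decomposition and the key ideas are the same.
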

\begin{proof}
If J-CSO is jointly privately enforceable, in our definition, it implies that for each individual intruder $I_i$, local opacity is privately enforceable and thus AIS$_i$ is nonempty by Theorem \ref{thm:AIS}. Since AIS$_i$ encodes all the possible local insertion functions that are privately enforcing, $\mathcal{G}$ as the product of AISs encodes all the possible joint insertion strategies that are privately enforcing. Since J-CSO is jointly privately enforceable, there exists at least one local insertion function for each intruder $I_i$ that is privately enforcing and the joint estimate never reveals the secret. Thus the joint insertion strategy is nonempty, which implies that $\mathcal{G}$ is nonempty.

Conversely, non-emptiness of $\mathcal{G}$ implies that AIS$_i$ is nonempty for any $i$. Thus, the local opacity is guaranteed. Furthermore, since $\mathcal{G}$, after pruning, encodes all the valid privately enforcing insertion functions for each intruder such that the joint state estimate never reveals the secret, J-CSO is guaranteed.
\end{proof}
\subsection{Complexity Analysis}
Given $N$ intruders and the system $G$ with $|X|$ states, the space and time complexity to construct each AIS is polynomial with $|X_\mathcal{E}|$ \cite{wu2016synthesis}, where $|X_\mathcal{E}|=2^{|X|}$ denotes the total number of states of the state estimator. Each NFM's state space is at most the state pace of its AIS. Therefore, the space complexity to construct $\mathcal{G}$ is polynomial in $|X_\mathcal{E}|$ and exponential in $|\mathcal{N}|$. The pruning process, in the worst case, looks over all the states in $\mathcal{G}$ and intermediate states, which is also polynomial in $|X_\mathcal{E}|$ and exponential in $|\mathcal{N}|$. So to sum up, the space and time complexity in our proposed centralized synthesis approach are both polynomial with $|X_\mathcal{E}|$ and exponential in $|\mathcal{N}|$.
\section{Conclusion}
In this paper, we investigate the opacity-enforcement problem for discrete-event systems that can be observed by multiple intruders. The major contribution of this paper is summarized as follows. First, we introduce opacity notions for two cases of DES in the presence of multiple intruders, one with coordination and the other without coordination. Next, we adopt the event insertion mechanism to ensure decentralized opacity for intruders without coordination; the synthesized insertion functions are further refined to enforce joint opacity of DES when intruders can coordinate via an intersection-based protocol. Future research directions may include: (i) introducing notions of joint opacity corresponding to other types of coordination protocols among the intruders; (ii) development of algorithms for enforcing other notions of joint opacity with respect to the new types of coordination protocols.

\bibliographystyle{IEEEtran}        % Include this if you use bibtex
\bibliography{OpacityEnforcement}

\end{document}